\documentclass[12pt]{article}

\usepackage{algorithm}              % algorithms
\usepackage[noend]{algpseudocode}   % algorithms
\usepackage{amsmath}                % math package
\usepackage{amssymb}                % math package
\usepackage{amsthm}                 % math package
\usepackage{caption}
\usepackage{color}
\usepackage[margin=1in]{geometry}
\usepackage[hidelinks]{hyperref}
\usepackage{titling}

\usepackage{forest} % Fibonacci's word recursion tree

\forestset{
  spine edge/.style={
    edge={draw=black, very thick},
  },
  dotted spine edge/.style={
    edge={draw=black, ultra thick, dotted},
  },
  factor edge/.style={
    edge={draw=black, thick, dashed},
  },
  dashdotted factor edge/.style={
    for tree={
      text=gray!70,
      edge={draw=gray!55, thin},
    },
    edge={draw=black, thick, dash dot},
  },
  context subtree/.style={
    for tree={
      text=gray!70,
      edge={draw=gray!55, thin},
    },
  },
}

\theoremstyle{plain}                % AMS theorem styles
\newtheorem{theorem}{Theorem}
\newtheorem{lemma}[theorem]{Lemma}

\newtheorem{conjecture}[theorem]{Conjecture}

\theoremstyle{definition}           % AMS definition styles

\newtheorem{example}[theorem]{Example}

\theoremstyle{remark}               % AMS remark styles
\newtheorem*{remark}{Remark}

\newtheorem*{claim}{Claim}

\thanksmarkseries{alph}

\newcommand{\Fib}[1]{F_{#1}}
\newcommand{\from}{\colon}
\newcommand{\PD}{\operatorname{PD}}
\newcommand{\PSD}{\operatorname{PSD}}
\newcommand{\SD}{\operatorname{SD}}
\newcommand{\SSC}{\operatorname{SSC}}

\usepackage{stackengine}
\newcommand{\rectangle}{{\ooalign{$\sqsubset\mkern6mu$\cr$\mkern6mu\sqsupset$\cr}}}
\newcommand{\PSSR}{\stackinset{c}{}{c}{}{\rule{.6pt}{2ex}}{$\rectangle$}}

\title{Generating Fibonacci Words via the Prefix--Suffix Duplication Operation}
\author{
    Diego Cabrera Salamanca \thanks{Departamento de Matem\'{a}ticas, Universidad Nacional de Colombia, Bogot\'{a}, Colombia. Email: \href{mailto:dicabreras@unal.edu.co}{\texttt{dicabreras@unal.edu.co}}.}
    \and
    Taylor J. Smith \thanks{Department of Computer Science, St.\ Francis Xavier University, Antigonish, Nova Scotia, Canada. Email: \href{mailto:tjsmith@stfx.ca}{\texttt{tjsmith@stfx.ca}}.}
}
\date{\today}

\begin{document}

%%%%%%

\maketitle

\begin{abstract}
The finite and infinite Fibonacci words are classical objects in combinatorics on words. Bio-inspired language operations provide a useful tool for studying how finite and infinite words can arise via local rewriting mechanisms. For example, the suffix square completion operation is known to generate the infinite Fibonacci word, as well as other infinite words such as the Thue–-Morse word and the period-doubling word.

The prefix-–suffix duplication operation produces a language of words formed by appending prefixes or suffixes of a word $w$ to the front or back of $w$ respectively, and Dumitran conjectured that Fibonacci words of the same index parity can be generated from one another by the bounded duplication length variant of this operation.

In this paper, we resolve and strengthen Dumitran's conjecture. We show that, for all $1 \leq p \leq n$, it is possible to generate the Fibonacci word $\Fib{2n}$ from $\Fib{2p}$, and $\Fib{2n+1}$ from $\Fib{2p+1}$, using only prefix duplications with a bound of $k \geq 3$. We furthermore show that this bound of $3$ is optimal, and we give an algorithm that produces a witness derivation in time linear in the length of the target Fibonacci word.

\medskip

\noindent\textit{Key words and phrases:} Bio-inspired language operation, duplication operation, Fibonacci word, prefix--suffix duplication

\medskip

\noindent\textit{MSC2020 classes:} 11B39 (primary); 68Q45, 68R15 (secondary).
\end{abstract}

%%%%%%

\section{Introduction}\label{sec:introduction}

The well-studied sequence of finite Fibonacci words can be defined either by the two-symbol alphabet morphism $\phi(0) = 01$ and $\phi(1) = 0$, or by the recurrence $\Fib{i} = \Fib{i-1}\Fib{i-2}$ with base cases $\Fib{0} = 0$ and $\Fib{1} = 01$. By iterating either the morphism or the recurrence, we obtain the infinite Fibonacci word $\mathbf{F} = 0100101001001\dots$.

It is known~\cite{Dumitran2015PrefixSuffixSquareCompletion} that the infinite Fibonacci word can be generated by the bio-inspired \emph{suffix square completion} operation, which is defined by
\begin{equation*}
\SSC(w) = \{wx \mid w = w'yxy \text{ for some } x \in \Sigma^{+} \text{ and } w', y \in \Sigma^{*}\}.
\end{equation*}
Other well-known infinite words, such as the period-doubling word and the Thue--Morse word, can be generated by this same operation. Indeed, this gives rise to a fascinating observation: applying the suffix square completion operation once to a word necessarily generates a word that contains a square, yet iterated application of the suffix square completion operation can generate infinite words that avoid any word power of rational exponent strictly greater than $2$.

By contrast, the infinite Thue--Morse word cannot be generated by the bio-inspired \emph{prefix--suffix duplication} operation $\PSD(w)$, which produces a language of words formed by appending prefixes or suffixes of a word $w$ to the front or back of $w$, respectively. A similar result is not known for the infinite Fibonacci word, but by defining a \emph{bounded} variant of the prefix--suffix duplication operation $\PSD_{k}(w)$, where appended prefixes and suffixes are constrained to have length at most $k$, it is possible that the Fibonacci word and other infinite words constructed via concatenation may be generated in this way.

Indeed, in his thesis~\cite{Dumitran2015PhDThesis}, Dumitran states the following conjecture:
\begin{conjecture}\label{conj:dumitran}
$\Fib{2n} \in \PSD_{k}^{*}(\Fib{2p})$ for any $1 \leq p \leq n$ and $k \geq 3$, and $\Fib{2n+1} \in \PSD_{k}^{*}(\Fib{2p+1})$ for any $1 \leq p \leq n$ and $k \geq 5$.
\end{conjecture}
In other words, Dumitran conjectures that we may obtain the infinite Fibonacci word by iteratively applying the bounded prefix--suffix duplication operation, but we may only obtain even-indexed Fibonacci words from even-indexed Fibonacci words with a bounded length of at least $3$ and likewise for odd indices with a bounded length of at least $5$.

In this paper, we resolve Dumitran's conjecture in the positive, demonstrating that Fibonacci words of the same index parity can be obtained from one another using bounded prefix--suffix duplications. In fact, we prove a stronger result demonstrating that only prefix duplications are necessary, and where the lengths of each duplication are bounded by $k \geq 3$ in both the even and odd cases. We furthermore show that the bound $k \geq 3$ is optimal. Consequently, arbitrarily long prefixes of the infinite Fibonacci word can be obtained from shorter same-parity Fibonacci words via a finite number of prefix duplications.

%%%%%%

\section{Preliminaries}\label{sec:prelims}

In this section, we define the main notions considered in this paper. For further details about formal language theory and combinatorics on words, see (for example) the survey by Choffrut and Karhum\"{a}ki~\cite{ChoffrutKarhumaki1997Combinatorics} or the book by Lothaire~\cite{Lothaire1983CombinatoricsOnWords}.

%%%

\subsection{Fibonacci Words}\label{subsec:fibonacci}

Our primary objects of interest in this paper are the finite and infinite \emph{Fibonacci words}. Given a binary alphabet $\Sigma = \{0, 1\}$, we define the Fibonacci morphism $\phi \from \Sigma^{*} \to \Sigma^{*}$ by
\begin{align*}
\phi(0) &= 01 \text{ and} \\
\phi(1) &= 0.
\end{align*}
We can then iterate this morphism to produce the sequence of finite Fibonacci words:
\begin{equation*}
\Fib{0} = 0, \Fib{1} = 01, \Fib{2} = 010, \Fib{3} = 01001, \Fib{4} = 01001010, \dots
\end{equation*}
Alternatively, we can obtain the $i$th Fibonacci word by starting with $\Fib{0} = 0$ and $\Fib{1} = 01$, then concatenating the two previous Fibonacci words to produce $\Fib{i} = \Fib{i-1}\Fib{i-2}$. This construction mimics how we obtain the $i$th Fibonacci number by adding together the two previous Fibonacci numbers.

We can also produce the infinite Fibonacci word $\mathbf{F} = 0100101001001\dots$ by taking the limit of the sequence $(\Fib{n})_{n \geq 0}$. Fibonacci words (as we have defined them here) seem to have been introduced by Knuth~\cite[Section 1.2.8, exercise 36]{Knuth1968TAOCPv1}; such words and their generation by various means were further studied by Chuan~\cite{Chuan1992FibonacciWords, Chuan1995GeneratingFibonacciWords}. For more details about Fibonacci words, see the survey by Berstel~\cite{Berstel1986FibonacciWords}.

%%%

\subsection{Duplication Operations}\label{subsec:duplication}

A number of operations in formal language theory have their roots in biology, and are therefore termed \emph{bio-inspired} language operations. Here, we consider three particular operations. Given a word $w$ over an alphabet $\Sigma$, we may define the following \emph{prefix--suffix duplication} operations:
\begin{itemize}
\item prefix duplication, $\PD(w) = \{uw \mid w = uy \text{ for some } u \in \Sigma^{+} \text{ and } y \in \Sigma^{*}\}$;
\item suffix duplication, $\SD(w) = \{wu \mid w = yu \text{ for some } u \in \Sigma^{+} \text{ and } y \in \Sigma^{*}\}$; and
\item prefix--suffix duplication, $\PSD(w) = \PD(w) \cup \SD(w)$.
\end{itemize}

\begin{example}
Suppose $w = 0101$. The nonempty prefixes of $w$ are $0$, $01$, $010$, and $0101$, so we have that
\begin{equation*}
\PD(w) = \{00101, 010101, 0100101, 01010101\}.
\end{equation*}
Likewise, the nonempty suffixes of $w$ are $1$, $01$, $101$, and $0101$, so we have that
\begin{equation*}
\SD(w) = \{01011, 010101, 0101101, 01010101\}.
\end{equation*}
Altogether, this produces
\begin{equation*}
\PSD(w) = \{00101, 01011, 010101, 0100101, 0101101, 01010101\}.
\end{equation*}
\end{example}

Additionally, given a positive integer $k$, we can further define \emph{bounded} variants of each prefix--suffix duplication operation:
\begin{itemize}
\item $k$-prefix duplication, $\PD_{k}(w) = \{uw \mid w = uy \text{ for some } u \in \Sigma^{+}, y \in \Sigma^{*} \text{ and } |u| \leq k\}$;
\item $k$-suffix duplication, $\SD_{k}(w) = \{wu \mid w = yu \text{ for some } u \in \Sigma^{+}, y \in \Sigma^{*} \text{ and } |u| \leq k\}$; and
\item $k$-prefix--suffix duplication, $\PSD_{k}(w) = \PD_{k}(w) \cup \SD_{k}(w)$.
\end{itemize}
It is straightforward to see that $\PD_{k}(w) \subseteq \PD(w)$, $\SD_{k}(w) \subseteq \SD(w)$, and $\PSD_{k}(w) \subseteq \PSD(w)$ for all $k$, and that $\PD_{k}(w) \subseteq \PD_{\ell}(w)$, $\SD_{k}(w) \subseteq \SD_{\ell}(w)$, and $\PSD_{k}(w) \subseteq \PSD_{\ell}(w)$ for all $k \leq \ell$.

\begin{example}
Suppose again that $w = 0101$, and let $k = 2$. In this case, we may only duplicate the prefixes $0$ and $01$ or the suffixes $1$ and $01$, and so we have that
\begin{equation*}
\PSD_{2}(w) = \{00101, 01011, 010101\}.
\end{equation*}
\end{example}

Lastly, we can iterate duplication operations in the usual way. For any duplication operation $\mathrm{Op} \in \{\PD, \SD, \PSD, \PD_{k}, \SD_{k}, \PSD_{k}\}$, we define
\begin{align*}
\mathrm{Op}^{0}(w)      &= \{w\}; \\
\mathrm{Op}^{i+1}(w)    &= \mathrm{Op}^{i}(w) \cup \mathrm{Op}(\mathrm{Op}^{i}(w)); \text{ and} \\
\mathrm{Op}^{*}(w)      &= \bigcup_{i \geq 0} \mathrm{Op}^{i}(w).
\end{align*}
The prefix--suffix duplication operation was introduced by Garc\'{\i}a Lopez, Manea, and Mitrana~\cite{GarciaLopez2014PrefixSuffixDuplication}, while the bounded variant of the operation was introduced by Dumitran, Gil, Manea, and Mitrana~\cite{Dumitran2015BoundedPrefixSuffixDuplication}.

%%%%%%

\section{Proof of Dumitran's Conjecture}\label{sec:proof}

In this section, we provide a proof resolving Dumitran's conjecture in the positive. Consequently, arbitrarily long prefixes of the infinite Fibonacci word can be obtained from shorter same-parity Fibonacci words via a finite number of prefix duplications.

We begin by establishing two elementary decompositions of Fibonacci words. The first decomposition shows that, by repeatedly expanding the even-indexed right factor in the Fibonacci recurrence, any even-indexed Fibonacci word $F_{2n}$ can be written as a concatenation of decreasing odd-indexed Fibonacci words followed by a smaller even-indexed Fibonacci word.

\begin{lemma}\label{lem:evenconcat}
For every $n \geq 1$ and every $1 \leq i \leq n$, we may write $\Fib{2n}$ as a decomposition
\begin{equation*}
\Fib{2n} = \Fib{2n-1}\Fib{2n-3} \cdots \Fib{2n-2i+1}\Fib{2n-2i}.
\end{equation*}
\end{lemma}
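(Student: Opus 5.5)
We fix $n \geq 1$ and argue by induction on $i$, for $1 \leq i \leq n$. The only tool is the recurrence $\Fib{j} = \Fib{j-1}\Fib{j-2}$, valid for $j \geq 2$. At each stage we expand the trailing even-indexed factor once, which produces the next odd-indexed factor and a smaller even-indexed factor.

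For the base case $i = 1$, the claimed decomposition reads $\Fib{2n} = \Fib{2n-1}\Fib{2n-2}$. Since $n \geq 1$ gives $2n \geq 2$, this is exactly the recurrence with $j = 2n$.

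For the inductive step, suppose that for some $i$ with $1 \leq i < n$ we have
\begin{equation*}
\Fib{2n} = \Fib{2n-1}\Fib{2n-3} \cdots \Fib{2n-2i+1}\Fib{2n-2i}.
\end{equation*}
Because $i < n$, the index $2n-2i$ is at least $2$, so the recurrence applies to the last factor:
\begin{equation*}
\Fib{2n-2i} = \Fib{2n-2i-1}\Fib{2n-2i-2} = \Fib{2n-2(i+1)+1}\Fib{2n-2(i+1)}.
\end{equation*}
Substituting this into the decomposition and using associativity of concatenation gives the statement for $i+1$. The induction therefore covers all $1 \leq i \leq n$.

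The only point needing care is the range of indices: the recurrence is applied only to indices at least $2$. This is exactly why the bound $i \leq n$ is needed. At $i = n$ the final factor is $\Fib{0}$, the base word, and it cannot be expanded further. Otherwise the argument is a routine unfolding of the definition, and I do not expect any real obstacle.
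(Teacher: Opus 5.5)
Your proposal is correct and matches the paper's proof: both fix $n$, induct on $i$, take the recurrence $\Fib{2n} = \Fib{2n-1}\Fib{2n-2}$ as the base case, and in the inductive step expand the trailing factor $\Fib{2n-2i}$, which is justified because $i < n$ forces $2n-2i \geq 2$. Your index bookkeeping ($2n-2i-1 = 2n-2(i+1)+1$ and $2n-2i-2 = 2n-2(i+1)$) is also the same as in the paper.
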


\begin{proof}
Fix $n \geq 1$. We prove the claim by induction on $i$.

For $i = 1$, the assertion is exactly the Fibonacci recurrence $\Fib{2n} = \Fib{2n-1}\Fib{2n-2}$.

Now, suppose the formula holds for some $i < n$. Then we may write
\begin{equation*}
\Fib{2n} = \Fib{2n-1}\Fib{2n-3} \cdots \Fib{2n-2i+1}\Fib{2n-2i}.
\end{equation*}
Since $i < n$, we have $2n-2i \geq 2$, and hence the recurrence gives $\Fib{2n-2i} = \Fib{2n-2i-1}\Fib{2n-2i-2}$. Substituting this into the previous expression for $\Fib{2n}$ gives
\begin{align*}
\Fib{2n}	&= \Fib{2n-1}\Fib{2n-3} \cdots \Fib{2n-2i+1}\Fib{2n-2i-1}\Fib{2n-2i-2} \\
		&= \Fib{2n-1}\Fib{2n-3} \cdots \Fib{2n-2(i+1)+3}\Fib{2n-2(i+1)+1}\Fib{2n-2(i+1)},
\end{align*}
which is the desired formula with $i + 1$ in place of $i$. Therefore, the formula holds for every $1 \leq i \leq n$.
\end{proof}

The second decomposition has an analogous proof. By repeatedly expanding the odd-indexed right factor in the Fibonacci recurrence, any odd-indexed Fibonacci word $F_{2n+1}$ can be written as a concatenation of decreasing even-indexed Fibonacci words followed by a smaller odd-indexed Fibonacci word.

\begin{lemma}\label{lem:oddconcat}
For every $n \geq 1$ and every $1 \leq i \leq n$, we may write $\Fib{2n+1}$ as a decomposition
\begin{equation*}
\Fib{2n+1} = \Fib{2n}\Fib{2n-2}\Fib{2n-4} \cdots \Fib{2n-2i+2}\Fib{2n-2i+1}.
\end{equation*}
\end{lemma}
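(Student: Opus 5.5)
We will argue exactly as in the proof of Lemma~\ref{lem:evenconcat}: fix $n \geq 1$ and proceed by induction on $i$, where each step unfolds the trailing odd-indexed factor once using the recurrence $\Fib{j} = \Fib{j-1}\Fib{j-2}$.

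\emph{Base case.} For $i = 1$, the claimed decomposition reads $\Fib{2n+1} = \Fib{2n}\Fib{2n-1}$. This is precisely the Fibonacci recurrence with index $2n+1 \geq 3$. There is a notational point here: for $i=1$ the displayed product contains only the first even factor $\Fib{2n} = \Fib{2n-2i+2}$ followed by $\Fib{2n-2i+1}$. Read this way, the list of even factors $\Fib{2n}, \Fib{2n-2}, \dots, \Fib{2n-2i+2}$ has exactly $i$ terms, and the word ends with the single odd factor $\Fib{2n-2i+1}$.

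\emph{Inductive step.} Assume the decomposition holds for some $i < n$:
\begin{equation*}
\Fib{2n+1} = \Fib{2n}\Fib{2n-2} \cdots \Fib{2n-2i+2}\Fib{2n-2i+1}.
\end{equation*}
Since $i < n$, we have $2n-2i+1 \geq 3$, so the recurrence applies to the last factor and gives $\Fib{2n-2i+1} = \Fib{2n-2i}\Fib{2n-2i-1}$. Substituting this yields
\begin{equation*}
\Fib{2n+1} = \Fib{2n}\Fib{2n-2} \cdots \Fib{2n-2i+2}\Fib{2n-2i}\Fib{2n-2i-1}.
\end{equation*}
Rewriting $2n-2i = 2n-2(i+1)+2$ and $2n-2i-1 = 2n-2(i+1)+1$ shows this is the formula with $i+1$ in place of $i$. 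This completes the induction.

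There is no real obstacle. The only point needing care is the index bookkeeping: we must check that the recurrence is legitimate, which requires index at least $2$ and holds here since $2n-2i+1 \geq 3$. We must also check that the product's endpoints match the statement, including the degenerate case $i=1$. At $i = n$ the final factor is $\Fib{1}$, consistent with the range $1 \leq i \leq n$.
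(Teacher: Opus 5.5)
Your proposal is correct and follows the paper's proof essentially verbatim: induction on $i$, with the base case $i=1$ being the recurrence $\Fib{2n+1} = \Fib{2n}\Fib{2n-1}$, and the inductive step using $2n-2i+1 \geq 3$ to expand $\Fib{2n-2i+1} = \Fib{2n-2i}\Fib{2n-2i-1}$ and then reindexing with $i+1$. Your remark on how to read the displayed product when $i=1$ is a reasonable clarification that the paper leaves implicit.
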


\begin{proof}
Fix $n \geq 1$. We prove the claim by induction on $i$.

For $i = 1$, the formula is again exactly the Fibonacci recurrence $\Fib{2n+1} = \Fib{2n}\Fib{2n-1}$.

Now suppose the formula holds for some $i < n$. Then we may write
\begin{equation*}
\Fib{2n+1} = \Fib{2n}\Fib{2n-2}\Fib{2n-4} \cdots \Fib{2n-2i+2}\Fib{2n-2i+1}.
\end{equation*}
Since $i < n$, we have $2n-2i+1 \geq 3$, and hence the recurrence gives $\Fib{2n-2i+1} = \Fib{2n-2i}\Fib{2n-2i-1}$. Substituting this into the previous expression for $\Fib{2n+1}$ gives
\begin{align*}
\Fib{2n+1}	&= \Fib{2n}\Fib{2n-2}\Fib{2n-4} \cdots \Fib{2n-2i+2}\Fib{2n-2i}\Fib{2n-2i-1} \\
			&= \Fib{2n}\Fib{2n-2}\Fib{2n-4} \cdots \Fib{2n-2(i+1)+4}\Fib{2n-2(i+1)+2}\Fib{2n-2(i+1)+1},
\end{align*}
which is the desired formula with $i + 1$ in place of $i$. Therefore, the formula holds for every $1 \leq i \leq n$.
\end{proof}

We will often require the fact that the current Fibonacci word we are working with is prefixed by the Fibonacci word $\Fib{2}$. The following lemma proves that this is the case for all Fibonacci words aside from the base cases $\Fib{0}$ and $\Fib{1}$.

\begin{lemma}\label{lem:F2prefix}
For all $n \geq 2$, $\Fib{2}$ is a prefix of $\Fib{n}$.
\end{lemma}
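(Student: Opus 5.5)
We will argue by strong induction on $n$, using only the recurrence $\Fib{n} = \Fib{n-1}\Fib{n-2}$ and the fact that a prefix of a prefix is a prefix.

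For the base case $n = 2$ the claim is trivial, since $\Fib{2}$ is a prefix of itself. It is convenient also to check $n = 3$ directly: $\Fib{3} = 01001$ begins with $\Fib{2} = 010$.

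For the inductive step, let $n \geq 4$ and assume the statement holds for all $m$ with $2 \leq m < n$. The recurrence gives $\Fib{n} = \Fib{n-1}\Fib{n-2}$, so $\Fib{n-1}$ is a prefix of $\Fib{n}$. Since $n - 1 \geq 3 \geq 2$, the inductive hypothesis says that $\Fib{2}$ is a prefix of $\Fib{n-1}$. By transitivity of the prefix relation, $\Fib{2}$ is a prefix of $\Fib{n}$.

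In fact only ordinary induction using $n-1$ is needed, with base case $n = 2$; the separate check at $n = 3$ is just a sanity check, because $\Fib{3} = \Fib{2}\Fib{1}$ already follows from the recurrence.

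There is no real obstacle here. The one point to watch is that the statement fails for the base cases: $\Fib{0} = 0$ and $\Fib{1} = 01$ are both shorter than $\Fib{2}$. This is why the induction must start at $n = 2$ and use the recurrence only for $n \geq 3$, and never reach down to $\Fib{1}$ as a hypothesis.
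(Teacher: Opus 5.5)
Your proof is correct and is essentially the paper's argument: induction on $n$ from the base case $n = 2$, using $\Fib{n} = \Fib{n-1}\Fib{n-2}$ to see that $\Fib{n-1}$, and hence $\Fib{2}$, is a prefix of $\Fib{n}$. As you note yourself, the separate check at $n = 3$ and the appeal to strong induction are harmless but unnecessary.
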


\begin{proof}
If $n = 2$, then the claim is immediate. Now, suppose $n \geq 3$ and $\Fib{2}$ is a prefix of $\Fib{n-1}$. By the Fibonacci recurrence,
\begin{equation*}
\Fib{n} = \Fib{n-1}\Fib{n-2}.
\end{equation*}
Since $\Fib{n-1}$ is the first factor of this concatenation, every prefix of $\Fib{n-1}$ is also a prefix of $\Fib{n}$; in particular, $\Fib{2}$ is a prefix of $\Fib{n}$. Thus, the claim holds by induction.
\end{proof}

\begin{remark}
In what follows, we will use shorthand notation inspired by formal language theory and write $u \Rightarrow_{k} v$ when $v \in \PD_{k}(u)$. Similarly, we will write $u \Rightarrow_{k}^{*} v$ when $v \in \PD_{k}^{*}(u)$.
\end{remark}

Another useful fact we will require is that applying the prefix duplication operation to a word $u$ to obtain a word $v$ produces the same result even when $u$ and $v$ are each suffixed by an arbitrary word $w$.

\begin{lemma}\label{lem:rightcontext}
If $u \Rightarrow_{k}^{*} v$, then $uw \Rightarrow_{k}^{*} vw$ for every word $w$.
\end{lemma}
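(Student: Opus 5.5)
The plan is to prove the single-step version first and then pass to the starred relation by induction on the number of steps.

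For the single step, suppose $u \Rightarrow_{k} v$. By the definition of $\PD_{k}$, we can write $u = xy$ with $x \in \Sigma^{+}$ and $|x| \leq k$, and then $v = xu = xxy$. Fix an arbitrary word $w$. Then $uw = x(yw)$, so $x$ is a nonempty prefix of $uw$ of length at most $k$. Duplicating this prefix gives
\begin{equation*}
x(uw) = (xu)w = vw \in \PD_{k}(uw),
\end{equation*}
so $uw \Rightarrow_{k} vw$. The key point is that prefix duplication depends only on a bounded-length prefix of the word. Appending $w$ on the right does not change that prefix, because $|x| \leq |u|$.

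For the general case, I use the definition of $\PD_{k}^{*}$ via the levels $\PD_{k}^{i}$. If $v \in \PD_{k}^{i}(u)$, I induct on $i$. When $i = 0$, we have $v = u$, so $vw = uw$ and the claim is trivial. For the inductive step, $v \in \PD_{k}^{i+1}(u)$ means either $v \in \PD_{k}^{i}(u)$, which is handled by the inductive hypothesis, or $v \in \PD_{k}(v')$ for some $v' \in \PD_{k}^{i}(u)$. In the latter case, the inductive hypothesis gives $uw \Rightarrow_{k}^{*} v'w$, and the single-step claim gives $v'w \Rightarrow_{k} vw$. Concatenating these two derivations shows $uw \Rightarrow_{k}^{*} vw$. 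Here I use that $\Rightarrow_{k}^{*}$ is transitive, which follows from the level-wise definition: each level contains the previous one, and one more application of $\PD_{k}$ moves up exactly one level.

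I do not expect a real obstacle. The only point needing care is bookkeeping with the iterated definition $\mathrm{Op}^{i+1}(w) = \mathrm{Op}^{i}(w) \cup \mathrm{Op}(\mathrm{Op}^{i}(w))$, namely checking that appending one step to a derivation in $\PD_{k}^{i}$ lands in $\PD_{k}^{i+1}$.
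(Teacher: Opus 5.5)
Your proposal is correct and takes the same approach as the paper: the single-step case holds because the duplicated prefix $x$ is still a prefix of $uw$, and this is then extended along the derivation. Your explicit induction on the levels $\PD_{k}^{i}$ simply writes out in detail the paper's one-line remark about applying the single-step argument to each step of a derivation.
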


\begin{proof}
It is sufficient to prove the statement for one prefix duplication step. Suppose $u \Rightarrow_{k} v$. Then $v = pu$, where $p$ is a nonempty prefix of $u$ and $|p| \leq k$. But then $p$ is also a nonempty prefix of $uw$, again with $|p| \leq k$. Therefore, $uw \Rightarrow_{k} puw = vw$. Applying this argument to each step in a derivation from $u$ to $v$ gives $uw \Rightarrow_{k}^{*} vw$.
\end{proof}

In other terms, this lemma allows us to derive one Fibonacci word from another while keeping the remainder of the current word unchanged on the right.

%%%

\subsection{Main Results}\label{subsec:mainresults}

Now, we turn our attention to the main machinery of our proof. To begin, we will show that every even-indexed Fibonacci word can be derived from the Fibonacci word $\Fib{2}$ via the prefix duplication operation. Since Lemma~\ref{lem:F2prefix} tells us that every Fibonacci word is prefixed by $\Fib{2}$, this result will give us a starting point from which we will be able to produce arbitrary prefixes in our main theorems.

\begin{lemma}\label{lem:evenPD}
$\Fib{2n} \in \PD_{k}^{*}(\Fib{2})$ for all $n \geq 1$ and all $k \geq 3$.
\end{lemma}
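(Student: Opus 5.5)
The plan is to argue by induction on $n$, building each next even-indexed word out of the previous one. The case $n = 1$ is trivial. For $n = 2$ an explicit derivation with $k = 3$ exists,
\begin{equation*}
010 \Rightarrow_{3} 01010 \Rightarrow_{3} 01001010 = \Fib{4},
\end{equation*}
prepending first the prefix $01$ and then the prefix $010$. The content of the lemma is the inductive step $\Fib{2n} \Rightarrow_{3}^{*} \Fib{2n+2}$; combined with the hypothesis $\Fib{2} \Rightarrow_{3}^{*} \Fib{2n}$, it gives the lemma for $k = 3$, and hence for all $k \geq 3$ since $\PD_{3}(w) \subseteq \PD_{k}(w)$.

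For the inductive step I would first use the recurrence to write
\begin{equation*}
\Fib{2n+2} = \Fib{2n+1}\Fib{2n} = \Fib{2n}\,\Fib{2n-1}\Fib{2n}.
\end{equation*}
Set $G = \Fib{2n-1}\Fib{2n}$, so that $\Fib{2n+2} = \Fib{2n}G$. By the induction hypothesis, $\Fib{2} \Rightarrow_{3}^{*} \Fib{2n}$. Lemma~\ref{lem:rightcontext} then gives $\Fib{2}G \Rightarrow_{3}^{*} \Fib{2n}G = \Fib{2n+2}$.

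It therefore suffices to produce the word $\Fib{2}G = 010\,\Fib{2n-1}\Fib{2n}$ from $\Fib{2n}$. This means prepending $010\,\Fib{2n-1}$ to $\Fib{2n}$. Since $\Fib{2n} = \Fib{2n-1}\Fib{2n-2}$, the factor to be created in front is essentially a second copy of $\Fib{2n-1}$ followed by an extra $010$. I would again route this through Lemma~\ref{lem:rightcontext}. The idea is to derive the new block from a short prefix ($\Fib{2}$ or $\Fib{3}$, which exist by Lemma~\ref{lem:F2prefix}) while the remainder of the word is carried along unchanged as right context. To make this work I would strengthen the induction to a pair of statements: $\Fib{2} \Rightarrow_{3}^{*} \Fib{m}$ for all $m \geq 2$ of the appropriate parity, together with an auxiliary ``square'' statement of the form $\Fib{2}\,x \Rightarrow_{3}^{*} \Fib{m}\,x$ whenever $x$ begins with $\Fib{2}$.

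As a guide to the right auxiliary statement, I would work the derivation backwards. Read a derivation in reverse: one can delete a leading $uu$ down to $u$ whenever $|u| \leq 3$. The computation above shows $\Fib{4}$ reduces to $\Fib{2}$ by deleting a square of $010$ and then a square of $01$. I would compute the reductions of $\Fib{6}$ and $\Fib{8}$ in the same way. The aim is to read off a uniform pattern, namely which square of $0$, $01$ or $010$ is removed at each point. I would then prove that pattern by induction, using the decompositions of Lemmas~\ref{lem:evenconcat} and~\ref{lem:oddconcat} to locate the block boundaries.

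The main obstacle is this gadget step: showing that $\Fib{2n-1}$ (plus the $010$) can be duplicated in front using only duplications of length at most $3$. The difficulty is that prefix duplications only ever copy material at the very front. So the long block must be grown from a short prefix, with everything else held fixed as right context. That is exactly where the correct strengthened hypothesis has to be identified and where the bound $3$ must be shown sufficient.
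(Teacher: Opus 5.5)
\textbf{There is a genuine gap: the proposal stops at the step that carries the content of the lemma.} What you do is correct. Writing $\Fib{2n+2} = \Fib{2n}\Fib{2n-1}\Fib{2n}$ and using the induction hypothesis with Lemma~\ref{lem:rightcontext} does reduce the inductive step to producing $\Fib{2}\Fib{2n-1}\Fib{2n}$ from $\Fib{2n}$. Your explicit derivation of $\Fib{4}$ is also correct. But you never show how to put $\Fib{2n-1}$ in front of $\Fib{2n}$. Instead you defer it to a pattern to be guessed from backward square reductions of $\Fib{6}$ and $\Fib{8}$, which is a plan for finding a proof rather than a proof. Also, your proposed ``auxiliary square statement'' ($\Fib{2}x \Rightarrow_{3}^{*} \Fib{m}x$ whenever $x$ begins with $\Fib{2}$) is just Lemma~\ref{lem:rightcontext} applied to the main claim. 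It adds no strength and would not close the gap.

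\textbf{The missing idea.} The move you already used once can be repeated: duplicate the leading $010$ to plant a fresh copy of $\Fib{2}$, then grow it in right context. Combine this with Lemma~\ref{lem:oddconcat}, which gives $\Fib{2n-1} = \Fib{2n-2}\Fib{2n-4}\cdots\Fib{2}\Fib{1}$ for $n \geq 2$ (your explicit derivation covers $n=1$). The construction is then:
\begin{itemize}
\item From $\Fib{2n}$, duplicate the prefix $01$ (length $2$) to get $\Fib{1}\Fib{2n}$, which still begins with $010$.
\item For $j = 1, \ldots, n-1$ in turn, duplicate the leading $010$ and grow it into $\Fib{2j}$, using $\Fib{2} \Rightarrow_{k}^{*} \Fib{2j}$ and Lemma~\ref{lem:rightcontext}. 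Every intermediate word begins with $010$ by Lemma~\ref{lem:F2prefix}, so each length-$3$ duplication is legal. The result is $\Fib{2n-1}\Fib{2n}$.
\item One more round with $j = n$ gives $\Fib{2n}\Fib{2n-1}\Fib{2n} = \Fib{2n+2}$.
\end{itemize}
This needs the hypothesis for every $\Fib{2j}$ with $j \leq n$, i.e.\ strong induction, not only the statement for $\Fib{2n}$. It needs no strengthened auxiliary statement and no analysis of square-reduction patterns. This is precisely the paper's proof.
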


\begin{proof}
We prove the result by strong induction on $n$.

For $n = 1$, we have $\Fib{2n} = \Fib{2}$, so $\Fib{2} \in \PD_{k}^{*}(\Fib{2})$ by definition.

Now, assume that $n \geq 1$ and that $\Fib{2j} \in \PD_{k}^{*}(\Fib{2})$ for every $1 \leq j \leq n$. We prove that $\Fib{2n+2} \in \PD_{k}^{*}(\Fib{2})$. By the recurrence, we know that $\Fib{2n+2} = \Fib{2n+1}\Fib{2n}$. Then, by applying Lemma~\ref{lem:oddconcat} with $i = n$, we obtain $\Fib{2n+1} = \Fib{2n}\Fib{2n-2} \cdots \Fib{2}\Fib{1}$. Altogether, we may write
\begin{equation*}
\Fib{2n+2} = \Fib{2n}\Fib{2n-2} \cdots \Fib{2}\Fib{1}\Fib{2n}.
\end{equation*}

By the strong induction hypothesis with $j = n$, we know that $\Fib{2} \Rightarrow_{k}^{*} \Fib{2n}$. Thus, we may derive $\Fib{2n}$ from $\Fib{2}$. Since $\Fib{2n}$ has prefix $\Fib{2}$ by Lemma~\ref{lem:F2prefix}, it also has prefix $\Fib{1}$, because $\Fib{2} = 010$ and $\Fib{1} = 01$. Moreover, $|\Fib{1}| = 2 \leq k$. Hence, one prefix duplication step produces
\begin{equation*}
\Fib{2n} \Rightarrow_{k} \Fib{1}\Fib{2n}.
\end{equation*}
The word $\Fib{1}\Fib{2n}$ again has prefix $\Fib{2}$: indeed, $\Fib{2n}$ begins with $010$, so $\Fib{1}\Fib{2n} = 01010\dots$, which begins with $\Fib{2} = 010$.

We now prepend, one at a time, the words $\Fib{2}$, $\Fib{4}$, $\Fib{6}$, \dots, $\Fib{2n}$.
Suppose, at some stage, the current word is $W$ and $W$ has prefix $\Fib{2}$. Let $1 \leq j \leq n$. Since $W$ has prefix $\Fib{2}$, and since $|\Fib{2}| = 3 \leq k$, one prefix duplication step produces
\begin{equation*}
W \Rightarrow_{k} \Fib{2}W.
\end{equation*}
By the induction hypothesis, $\Fib{2} \Rightarrow_{k}^{*} \Fib{2j}$. Then, by Lemma~\ref{lem:rightcontext}, we know that
\begin{equation*}
\Fib{2}W \Rightarrow_{k}^{*} \Fib{2j}W.
\end{equation*}
Therefore, $W \Rightarrow_{k}^{*} \Fib{2j}W$, and $\Fib{2j}W$ has prefix $\Fib{2}$ again by Lemma~\ref{lem:F2prefix}.

Starting from $\Fib{1}\Fib{2n}$, apply this argument successively with $j = 1, 2, \ldots, n$.
This produces the derivation
\begin{equation*}
\Fib{1}\Fib{2n} \Rightarrow_{k}^{*} \Fib{2}\Fib{1}\Fib{2n} \Rightarrow_{k}^{*} \Fib{4}\Fib{2}\Fib{1}\Fib{2n} \Rightarrow_{k}^{*} \cdots \Rightarrow_{k}^{*} \Fib{2n}\Fib{2n-2} \cdots \Fib{2}\Fib{1}\Fib{2n}.
\end{equation*}
Observe that the word produced by this derivation sequence is exactly $\Fib{2n+2}$. Hence, $\Fib{2n+2} \in \PD_{k}^*(\Fib{2})$ as required.
\end{proof}

We will find the following consequence of Lemma~\ref{lem:evenPD} especially useful in the proofs of our main theorems. Suppose $W$ has prefix $\Fib{2}$, and let $r \geq 1$. By Lemma~\ref{lem:evenPD}, $\Fib{2} \Rightarrow_{k}^{*} \Fib{2r}$. Since $W$ has prefix $\Fib{2}$, we first duplicate this prefix to obtain $W \Rightarrow_{k} \Fib{2}W$. Then, again by Lemma~\ref{lem:rightcontext}, we know that $\Fib{2}W \Rightarrow_{k}^{*} \Fib{2r}W$, so $W \Rightarrow_{k}^{*} \Fib{2r}W$. Thus, whenever the current word begins with $\Fib{2}$, we may prepend any even-indexed Fibonacci word $\Fib{2r}$ to the current word.

\begin{figure}[t]
\centering
\begin{forest}
for tree={
  font=\small,
  l sep+=8pt,
  s sep+=4pt,
}
[$\Fib{2n}$
  [$\Fib{2n-1}$, factor edge
    [$\Fib{2n-2}$, context subtree
      [$\Fib{3}$, edge={thick, dotted}
        [$\Fib{2}$]
        [$\Fib{1}$]
      ]
      [$\Fib{2}$, edge={thick, dotted}]
    ]
    [$\Fib{2n-3}$, context subtree
      [$\Fib{4}$, edge={thick, dotted}
        [$\Fib{3}$ [$\Fib{2}$] [$\Fib{1}$]]
        [$\Fib{2}$]
      ]
      [$\Fib{3}$, edge={thick, dotted}
        [$\Fib{2}$]
        [$\Fib{1}$]
      ]
    ]
  ]
  [$\Fib{2n-2}$, spine edge
    [$\Fib{2n-3}$, factor edge
      [$\Fib{4}$, context subtree, edge={thick, dotted}
        [$\Fib{3}$ [$\Fib{2}$] [$\Fib{1}$]]
        [$\Fib{2}$]
      ]
      [$\Fib{3}$, context subtree, edge={thick, dotted}
        [$\Fib{2}$]
        [$\Fib{1}$]
      ]
    ]
    [$\Fib{2n-4}$, spine edge
      [$\Fib{3}$, dashdotted factor edge
        [$\Fib{2}$, context subtree]
        [$\Fib{1}$, context subtree]
      ]
      [$\Fib{4}$, dotted spine edge
        [$\Fib{3}$, factor edge
            [$\Fib{2}$, context subtree]
            [$\Fib{1}$, context subtree]
        ]
        [$\Fib{2}$, spine edge]
      ]
    ]
  ]
]
\end{forest}
\caption{Recursive decomposition of $\Fib{2n}$ from Lemma~\ref{lem:evenconcat} underlying the construction in the proof of Theorem~\ref{thm:even}. Dashed edges indicate the odd-indexed Fibonacci word factors exposed by recursively expanding the even-indexed Fibonacci words along the thick edges.}
\label{fig:tree-even}
\end{figure}

We may now prove the even-indexed case of Dumitran's conjecture. Interestingly, while Dumitran's conjecture uses the prefix--suffix duplication operation $\PSD_{k}^{*}$, our approach only uses prefix duplication and allows us to disregard suffixes entirely.

\begin{theorem}\label{thm:even}
$\Fib{2n} \in \PD_{k}^{*}(\Fib{2p})$ for all $1 \leq p \leq n$ and all $k \geq 3$.
\end{theorem}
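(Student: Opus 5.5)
Fix $1 \leq p \leq n$ and $k \geq 3$. The plan is to use Lemma~\ref{lem:evenconcat} with $i = n-p$ (the case $p = n$ being trivial), which gives
\begin{equation*}
\Fib{2n} = \Fib{2n-1}\Fib{2n-3} \cdots \Fib{2p+1}\Fib{2p}.
\end{equation*}
Starting from the current word $W_0 = \Fib{2p}$, we prepend the odd-indexed factors $\Fib{2p+1}, \Fib{2p+3}, \ldots, \Fib{2n-1}$ in that order, each on the left of the current word. Each intermediate word $W_m = \Fib{2p+2m-1}\cdots\Fib{2p+1}\Fib{2p}$ is then a suffix-closed piece of the target, and after $n-p$ stages we obtain $\Fib{2n}$.

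\textbf{The key step} is to show that whenever the current word $W$ has prefix $\Fib{2}$, we have $W \Rightarrow_{k}^{*} \Fib{2r+1}W$ for every $r \geq 1$. For this I would use Lemma~\ref{lem:oddconcat} with $i = r$:
\begin{equation*}
\Fib{2r+1} = \Fib{2r}\Fib{2r-2}\cdots\Fib{2}\Fib{1}.
\end{equation*}
\begin{itemize}
\item First, since $W$ begins with $\Fib{2} = 010$, it begins with $\Fib{1} = 01$. Because $|\Fib{1}| = 2 \leq k$, one duplication gives $W \Rightarrow_{k} \Fib{1}W$.
\item The new word starts $01010\ldots$, so it again has prefix $\Fib{2}$.
\item Next, we apply the consequence of Lemma~\ref{lem:evenPD} noted after its proof (prepend $\Fib{2}$, then grow it to $\Fib{2j}$ via Lemma~\ref{lem:rightcontext}) successively for $j = 1, 2, \ldots, r$. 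This yields $\Fib{2r}\cdots\Fib{2}\Fib{1}W = \Fib{2r+1}W$.
\item After each step the word begins with some $\Fib{2j}$, hence with $\Fib{2}$ by Lemma~\ref{lem:F2prefix}, so the hypothesis needed for the next step is maintained.
\end{itemize}
This is exactly the mechanism already used inside the proof of Lemma~\ref{lem:evenPD}, now stated for arbitrary $W$.

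To apply the key step at every stage, each current word must have prefix $\Fib{2}$.
\begin{itemize}
\item $W_0 = \Fib{2p}$ with $p \geq 1$ has prefix $\Fib{2}$ by Lemma~\ref{lem:F2prefix}, since $2p \geq 2$.
\item Each later $W_m$ begins with an odd Fibonacci word $\Fib{2p+2m-1}$ of index at least $3$, which again has prefix $\Fib{2}$ by the same lemma.
\end{itemize}
Applying the key step with $r = p, p+1, \ldots, n-1$ in turn therefore gives $\Fib{2p} \Rightarrow_{k}^{*} \Fib{2n}$.

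I expect the main obstacle to be the bookkeeping in the key step. One must check that the prefix condition survives the initial $\Fib{1}$ duplication, which is the only place the word could fail to begin with $010$. One must also confirm that no duplication ever exceeds length $3$: only $\Fib{1}$ and $\Fib{2}$ are ever copied directly, and everything longer arises through Lemma~\ref{lem:evenPD} under Lemma~\ref{lem:rightcontext}.

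A cleaner alternative is also available: prove $\Fib{2} \Rightarrow_k^* \Fib{2n}$ via Lemma~\ref{lem:evenPD}, then observe that $\Fib{2p}$ has prefix $\Fib{2}$. However, Lemma~\ref{lem:rightcontext} only lets us grow a prefix into a longer word while leaving the remainder of the word as extra material. So the decomposition route above is the one I would commit to.
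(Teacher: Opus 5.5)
Your proposal is correct and follows the paper's proof essentially step for step. You decompose $\Fib{2n}$ via Lemma~\ref{lem:evenconcat} with $i = n-p$. You establish the auxiliary claim that $W \Rightarrow_{k}^{*} \Fib{2q+1}W$ whenever $W$ begins with $\Fib{2}$ (duplicate $\Fib{1}$, then prepend $\Fib{2}, \Fib{4}, \ldots, \Fib{2q}$ using the consequence of Lemma~\ref{lem:evenPD}), and apply it for $q = p, \ldots, n-1$. You were also right to discard the ``cleaner alternative'': growing the prefix $\Fib{2}$ of $\Fib{2p}$ into $\Fib{2n}$ would leave the rest of $\Fib{2p}$ appended, so it fails whenever $p > 1$.
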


\begin{proof}
If $p = n$, then the claim is immediate, since $\Fib{2n} \in \PD_{k}^{*}(\Fib{2n})$.

Now, assume that $p < n$. Applying Lemma~\ref{lem:evenconcat} with $i = n-p$ produces
\begin{equation*}
\Fib{2n} = \Fib{2n-1}\Fib{2n-3} \cdots \Fib{2p+1}\Fib{2p}.
\end{equation*}
Thus, it is enough to show that starting from $\Fib{2p}$, we may successively prepend the odd-indexed Fibonacci words $\Fib{2p+1}$, $\Fib{2p+3}$, \dots, $\Fib{2n-1}$.

We first show how to prepend a single odd-indexed Fibonacci word to any current word beginning with $\Fib{2}$. 

\begin{claim}
Let $W$ be a word with prefix $\Fib{2}$, and let $q \geq 1$. Then
\begin{equation*}
W \Rightarrow_{k}^{*} \Fib{2q+1}W.
\end{equation*}
\end{claim}

\begin{proof}[Proof of claim]
Since $W$ has prefix $\Fib{2} = 010$, it has prefix $\Fib{1} = 01$. Since $|\Fib{1}| = 2 \leq k$, one prefix duplication step produces $W \Rightarrow_{k} \Fib{1}W$. The word $\Fib{1}W$ again begins with $\Fib{2}$, because $W$ begins with $010$, so $\Fib{1}W = 01010\dots$.

Applying Lemma~\ref{lem:oddconcat} with $i = q$ produces
\begin{equation*}
\Fib{2q+1} = \Fib{2q}\Fib{2q-2} \cdots \Fib{2}\Fib{1}.
\end{equation*}
By our earlier observed consequence of Lemma~\ref{lem:evenPD}, we may now successively prepend $\Fib{2}$, $\Fib{4}$, \dots, $\Fib{2q}$ to the current word. This produces the derivation
\begin{equation*}
\Fib{1}W \Rightarrow_{k}^{*} \Fib{2}\Fib{1}W \Rightarrow_{k}^{*} \Fib{4}\Fib{2}\Fib{1}W \Rightarrow_{k}^{*} \cdots \Rightarrow_{k}^{*} \Fib{2q}\Fib{2q-2} \cdots \Fib{2}\Fib{1}W.
\end{equation*}
Observe that the word produced by this derivation sequence is $\Fib{2q+1}W$. Hence, $W \Rightarrow_{k}^{*} \Fib{2q+1}W$ whenever $W$ has the prefix $\Fib{2}$.
\end{proof}

Now, we return to the proof of the theorem. Since $2p \geq 2$, Lemma~\ref{lem:F2prefix} says that $\Fib{2p}$ has prefix $\Fib{2}$. Applying our claim with $W = \Fib{2p}$ and $q = p$, we obtain
\begin{equation*}
\Fib{2p} \Rightarrow_{k}^{*} \Fib{2p+1}\Fib{2p}.
\end{equation*}
The resulting word begins with $\Fib{2}$, since $\Fib{2p+1}$ begins with $\Fib{2}$. Applying the same argument again with $q = p+1$, then with $q = p+2$, and so on, gives
\begin{equation*}
\Fib{2p} \Rightarrow_{k}^{*} \Fib{2p+1}\Fib{2p} \Rightarrow_{k}^{*} \Fib{2p+3}\Fib{2p+1}\Fib{2p} \Rightarrow_{k}^{*} \cdots \Rightarrow_{k}^{*} \Fib{2n-1}\Fib{2n-3} \cdots \Fib{2p+1}\Fib{2p}.
\end{equation*}
Observe that the word produced by this derivation sequence is $\Fib{2n}$. Hence, $\Fib{2n} \in \PD_{k}^{*}(\Fib{2p})$ as required.
\end{proof}

Naturally, since $\PD_{k}(w) \subseteq \PSD_{k}(w)$, Theorem~\ref{thm:even} implies the even-indexed case of Dumitran's conjecture.

\begin{figure}[t]
\centering
\begin{forest}
for tree={
  font=\small,
  l sep+=8pt,
  s sep+=4pt,
}
[$\Fib{2n+1}$
  [$\Fib{2n}$, factor edge
    [$\Fib{2n-1}$, context subtree
      [$\Fib{2}$, edge={thick, dotted}]
      [$\Fib{3}$, edge={thick, dotted}
        [$\Fib{2}$]
        [$\Fib{1}$]
      ]
    ]
    [$\Fib{2n-2}$, context subtree
      [$\Fib{3}$, edge={thick, dotted}
        [$\Fib{2}$]
        [$\Fib{1}$]
      ]
      [$\Fib{4}$, edge={thick, dotted}
        [$\Fib{3}$ [$\Fib{2}$] [$\Fib{1}$]]
        [$\Fib{2}$]
      ]
    ]
  ]
  [$\Fib{2n-1}$, spine edge
    [$\Fib{2n-2}$, factor edge
      [$\Fib{3}$, context subtree, edge={thick, dotted}
        [$\Fib{2}$]
        [$\Fib{1}$]
      ]
      [$\Fib{4}$, context subtree, edge={thick, dotted}
        [$\Fib{3}$
            [$\Fib{2}$]
            [$\Fib{1}$]
        ]
        [$\Fib{2}$]
      ]
    ]
    [$\Fib{2n-3}$, spine edge
      [$\Fib{4}$, dashdotted factor edge
        [$\Fib{3}$, context subtree
            [$\Fib{2}$]
            [$\Fib{1}$]
        ]
        [$\Fib{2}$, context subtree]
      ]
      [$\Fib{3}$, dotted spine edge]
    ]
  ]
]
\end{forest}
\caption{Recursive decomposition of $\Fib{2n+1}$ from Lemma~\ref{lem:oddconcat} underlying the construction in the proof of Theorem~\ref{thm:odd}. Dashed edges indicate the even-indexed Fibonacci word factors exposed by recursively expanding the odd-indexed Fibonacci words along the thick edges.}
\label{fig:tree-odd}
\end{figure}

Moving on to the odd-indexed case of Dumitran's conjecture, we see not only that we may once again focus entirely on prefix duplication, but also that we may reduce the bound on $k$ from $5$ to $3$.

\begin{theorem}\label{thm:odd}
$\Fib{2n+1} \in \PD_{k}^{*}(\Fib{2p+1})$ for all $1 \leq p \leq n$ and all $k \geq 3$.
\end{theorem}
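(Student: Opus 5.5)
The plan is to mirror the proof of Theorem~\ref{thm:even}, using Lemma~\ref{lem:oddconcat} in place of Lemma~\ref{lem:evenconcat}. If $p = n$ there is nothing to prove. Otherwise, apply Lemma~\ref{lem:oddconcat} with $i = n-p$ to obtain
\begin{equation*}
\Fib{2n+1} = \Fib{2n}\Fib{2n-2} \cdots \Fib{2p+2}\Fib{2p+1}.
\end{equation*}
It then suffices to start from $\Fib{2p+1}$ and successively prepend the even-indexed words $\Fib{2p+2}, \Fib{2p+4}, \dots, \Fib{2n}$, in that order. Each new factor goes on the left of the current word, so the smallest index is added first and the largest ends up leftmost.

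The key tool is the consequence of Lemma~\ref{lem:evenPD} recorded after its proof. If the current word $W$ has prefix $\Fib{2}$, we duplicate that prefix, which is allowed because $|\Fib{2}| = 3 \leq k$, giving $W \Rightarrow_{k} \Fib{2}W$. Then Lemma~\ref{lem:evenPD} together with Lemma~\ref{lem:rightcontext} gives $\Fib{2}W \Rightarrow_{k}^{*} \Fib{2r}W$ for any $r \geq 1$. So any even-indexed Fibonacci word can be prepended to any word beginning with $\Fib{2}$.

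To run the iteration, note first that $\Fib{2p+1}$ has prefix $\Fib{2}$ by Lemma~\ref{lem:F2prefix}, since $2p+1 \geq 3$. So we may prepend $\Fib{2p+2}$. After each step the current word starts with some $\Fib{2r}$ with $2r \geq 2$, so by Lemma~\ref{lem:F2prefix} it again has prefix $\Fib{2}$ and the invariant is preserved. Iterating with $r = p+1, p+2, \dots, n$ gives the chain
\begin{equation*}
\Fib{2p+1} \Rightarrow_{k}^{*} \Fib{2p+2}\Fib{2p+1} \Rightarrow_{k}^{*} \cdots \Rightarrow_{k}^{*} \Fib{2n}\Fib{2n-2}\cdots\Fib{2p+2}\Fib{2p+1} = \Fib{2n+1}.
\end{equation*}

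I expect no serious obstacle here: the odd case is actually simpler than the even case. The factors to prepend are even-indexed, so no auxiliary claim like the one inside Theorem~\ref{thm:even} is needed, and the $\Fib{1}$-duplication trick is unnecessary. The only points to check are that every duplicated prefix has length at most $3$, which holds because only $\Fib{2}$ is ever duplicated directly and all other steps go through Lemma~\ref{lem:evenPD}, and that the invariant ``prefix $\Fib{2}$'' holds at the start, which requires $p \geq 1$.
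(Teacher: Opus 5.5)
Your proposal is correct and matches the paper's proof step for step. You decompose $\Fib{2n+1}$ via Lemma~\ref{lem:oddconcat} with $i = n-p$, then, starting from $\Fib{2p+1}$ (which has prefix $\Fib{2}$ by Lemma~\ref{lem:F2prefix}), you apply the consequence of Lemma~\ref{lem:evenPD} for $r = p+1, \dots, n$ while maintaining the invariant that the current word begins with $\Fib{2}$.
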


\begin{proof}
If $p = n$, then the claim is immediate, since $\Fib{2n+1} \in \PD_{k}^{*}(\Fib{2n+1})$.

Now, assume that $p < n$. Applying Lemma~\ref{lem:oddconcat} with $i = n-p$ produces
\begin{equation*}
\Fib{2n+1} = \Fib{2n}\Fib{2n-2} \cdots \Fib{2p+2}\Fib{2p+1}.
\end{equation*}
Thus, it is enough to show that starting from $\Fib{2p+1}$, we may successively prepend the even-indexed Fibonacci words $\Fib{2p+2}$, $\Fib{2p+4}$, \dots, $\Fib{2n}$.

Since $2p+1 \geq 3$, Lemma~\ref{lem:F2prefix} says that $\Fib{2p+1}$ has prefix $\Fib{2}$. By our earlier observed consequence of Lemma~\ref{lem:evenPD}, whenever the current word $W$ begins with $\Fib{2}$, we may prepend any even-indexed Fibonacci word $\Fib{2r}$ via the derivation $W \Rightarrow_{k}^{*} \Fib{2r}W$. Applying this observation with $W = \Fib{2p+1}$ and $r = p+1$, we obtain
\begin{equation*}
\Fib{2p+1} \Rightarrow_{k}^{*} \Fib{2p+2}\Fib{2p+1}.
\end{equation*}
The resulting word begins with $\Fib{2}$, since $\Fib{2p+2}$ begins with $\Fib{2}$. Applying the same argument again with $r = p+2$, then with $r = p+3$, and so on, gives
\begin{equation*}
\Fib{2p+1} \Rightarrow_{k}^{*} \Fib{2p+2}\Fib{2p+1} \Rightarrow_{k}^{*} \Fib{2p+4}\Fib{2p+2}\Fib{2p+1} \Rightarrow_{k}^{*} \cdots \Rightarrow_{k}^{*} \Fib{2n}\Fib{2n-2} \cdots \Fib{2p+2}\Fib{2p+1}.
\end{equation*}
Observe that the word produced by this derivation sequence is $\Fib{2n+1}$. Hence, $\Fib{2n+1} \in \PD_{k}^{*}(\Fib{2p+1})$ as required.
\end{proof}

Again, since $\PD_{k}(w) \subseteq \PSD_{k}(w)$, Theorem~\ref{thm:odd} implies the odd-indexed case of Dumitran's conjecture.

%%%

\subsection{Optimality of the Bound \texorpdfstring{$k \geq 3$}{k >= 3}}\label{subsec:bound}

Given that the lower bound on the duplication length $k$ was successfully reduced from $5$ in the odd-indexed case of Dumitran's original conjecture to $3$ in the statement of Theorem~\ref{thm:odd}, one may wonder whether it is possible to lower the bound even further. In this section, we will show that the bound $k \geq 3$ cannot be lowered in the statements of either Theorem~\ref{thm:even} or Theorem~\ref{thm:odd}. It suffices for us to rule out the case where $k = 2$, since $\PSD_{1}^{*} \subseteq \PSD_{2}^{*}$.

To establish the optimality of our lower bound on $k$, we will use the prefix--suffix square reduction operation, introduced by Bottoni, Labella, and Mitrana~\cite{BottoniLabellaMitrana2017PrefixSuffixSquareReduction} and defined as follows:
\begin{equation*}
\PSSR(x) = \{uy \mid x = uuy, u \in \Sigma^{+}, y \in \Sigma^{*}\} \cup \{yu \mid x = yuu, u \in \Sigma^{+}, y \in \Sigma^{*}\}.
\end{equation*}
We can likewise define the $k$-prefix--suffix square reduction operation, denoted $\PSSR_{k}(x)$, in the usual way for any positive integer $k$.

If a word $y$ is obtained from $x$ by a bounded prefix--suffix duplication of length at most $2$, then $x$ can be obtained from $y$ by deleting one copy from a square prefix or square suffix $uu$, where $|u| \leq 2$. Hence, if $y \in \PSD_{2}^{*}(x)$, then $y$ must reduce to $x$ by repeatedly deleting square prefixes or square suffixes of length $1$ or $2$.

For the even-indexed case, take $p = 1$ and $n = 2$. Then $\Fib{2} = 010$ and $\Fib{4} = 01001010$. The word $\Fib{4}$ has no square prefix of length $1$ or $2$, and it has only one square suffix of length at most $2$: $1010 = (10)^2$. Deleting one copy of this suffix gives $\PSSR_{2}(\Fib{4}) = 010010$, but this resultant word has no square prefix or square suffix of length $1$ or $2$. Therefore, every possible $2$-prefix square reduction sequence from $\Fib{4}$ terminates at $010010$, not at $\Fib{2}$, and it follows that
\begin{equation*}
\Fib{4} \not\in \PSD_{2}^{*}(\Fib{2}).
\end{equation*}

For the odd-indexed case, again take $p = 1$ and $n = 2$. Then $\Fib{3} = 01001$ and $\Fib{5} = 0100101001001$. The word $\Fib{5}$ has no square prefix or square suffix of length $1$ or $2$: its first two symbols are $01$, its first four symbols are $0100$, its last two symbols are $01$, and its last four symbols are $1001$. Thus, $\Fib{5}$ is irreducible under all $2$-prefix--suffix square reductions. Since $|\Fib{5}| > |\Fib{3}|$, any derivation producing $\Fib{5}$ from $\Fib{3}$ must be nonempty, so it follows that
\begin{equation*}
\Fib{5} \not\in \PSD_{2}^{*}(\Fib{3}).
\end{equation*}

Consequently, neither $k = 1$ nor $k = 2$ suffice, and so the bound $k \geq 3$ is optimal.

%%%%%%

\section{Algorithmic Implementation}\label{sec:pseudocode}

The constructive nature of the proofs of Theorems~\ref{thm:even} and~\ref{thm:odd} lends itself naturally to an algorithmic implementation. Each of the derivations used in those proofs can be expanded into a sequence of individual prefix duplication operations, and in doing so, the argument can be reformulated as a recursive procedure that outputs a valid $\PD_{k}^{*}$ derivation from a seed Fibonacci word $\Fib{p}$ to a target Fibonacci word $\Fib{n}$.

In this section, we present pseudocode for a procedure \textproc{Derive} (Algorithm~\ref{alg:fw-derivation}), which computes a derivation symbolically by recording the length of the prefix duplicated at each step, rather than constructing every intermediate word directly. Together with the seed Fibonacci word $\Fib{p}$, the resulting sequence uniquely determines every word in the derivation.

In Algorithm~\ref{alg:fw-derivation}, the global sequence $S$ stores the lengths of the prefixes to be duplicated. The instruction \textproc{Emit}($d$) appends an integer $d$ to $S$, and this corresponds to duplicating the prefix of length $d$ of the current word at the current step of the derivation. (Note that, at any point in the computation, the ``current word" is understood to be the word obtained from $\Fib{p}$ by applying, in order, the prefix duplications whose lengths have currently been stored in $S$.)

The procedure \textproc{BuildEven}($j$) implements the inductive construction from Lemma~\ref{lem:evenPD}. If the current word has the form $\Fib{2}W$, then the sequence given by this procedure transforms it into $\Fib{2j}W$. The procedure \textproc{PrependEven}($j$) implements the consequence of Lemma~\ref{lem:evenPD} by first duplicating the leading prefix $\Fib{2}$ and then calling \textproc{BuildEven}($j$) to transform $\Fib{2}$ into $\Fib{2j}$. Similarly, the procedure \textproc{PrependOdd}($j$) implements the claim from the proof of Theorem~\ref{thm:even} by prepending $\Fib{2j+1}$ to any word beginning with $\Fib{2}$.

\begin{remark}
Algorithm~\ref{alg:fw-derivation} implements the derivations arising from the proofs of Theorems~\ref{thm:even} and~\ref{thm:odd}, but does not attempt to optimize or minimize the number of prefix duplications used in each derivation. Indeed, the procedure uses only prefix duplication lengths of $2$ and $3$, and is therefore valid for every $k \geq 3$.
\end{remark}

\begin{algorithm}[!ht]
\caption{Fibonacci word derivation via prefix duplication (PD)}
\label{alg:fw-derivation}
\begin{algorithmic}[1]
\Statex \textbf{Global:} $p$ \Comment{index of seed Fibonacci word}
\Statex \textbf{Global:} $k$ \Comment{maximum allowable PD prefix length}
\Statex \textbf{Global:} $S$ \Comment{sequence of emitted prefix lengths}

\Procedure{Emit}{$d$}
    \State append $d$ to $S$
\EndProcedure

\Procedure{BuildEven}{$j$}
    \Comment{inductive construction (Lemma~\ref{lem:evenPD})}
    \If{$j = 1$}
        \State \Return
    \EndIf
    \State \Call{BuildEven}{$j-1$}
    \State \Call{Emit}{$2$} \Comment{\textproc{Emit}$(2)$ $\rightarrow$ duplicate $\Fib{1} = 01$}
    \For{$i \gets 1$ \textbf{to} $j-1$}
        \State \Call{PrependEven}{$i$}
    \EndFor
\EndProcedure

\Procedure{PrependEven}{$j$}
    \Comment{$W \Rightarrow_{k}^{*} \Fib{2j}W$, $W$ starting with $\Fib{2}$ (Lemma~\ref{lem:evenPD})}
    \State \Call{Emit}{$3$} \Comment{\textproc{Emit}$(3)$ $\rightarrow$ duplicate $\Fib{2} = 010$}
    \State \Call{BuildEven}{$j$}
\EndProcedure

\Procedure{PrependOdd}{$j$}
    \Comment{$W \Rightarrow_{k}^{*} \Fib{2j+1}W$, $W$ starting with $\Fib{2}$ (claim in Thm.~\ref{thm:even})}
    \State \Call{Emit}{$2$} \Comment{\textproc{Emit}$(2)$ $\rightarrow$ duplicate $\Fib{1} = 01$}
    \For{$i \gets 1$ \textbf{to} $j$}
        \State \Call{PrependEven}{$i$}
    \EndFor
\EndProcedure

\Function{Derive}{$n$}
    \If{$p < 2$ \textbf{or} $n < p$ \textbf{or} $n \not\equiv p \pmod 2$ \textbf{or} $k < 3$}
        \State \Return fail
    \EndIf
    \State $S \gets ()$
    \If{$p$ is even}
        \For{$j \gets p/2$ \textbf{to} $n/2 - 1$}
            \State \Call{PrependOdd}{$j$}
        \EndFor
    \Else
        \For{$j \gets (p+1)/2$ \textbf{to} $(n-1)/2$}
            \State \Call{PrependEven}{$j$}
        \EndFor
    \EndIf
    \State \Return $S$
\EndFunction
\end{algorithmic}
\end{algorithm}

%%%

\subsection{Correctness}\label{subsec:correctness}

The correctness of Algorithm~\ref{alg:fw-derivation} follows directly from the proofs given in Section~\ref{sec:proof}. The procedure \textproc{BuildEven}($j$) recursively expands the inductive construction in Lemma~\ref{lem:evenPD}, thereby transforming a word of the form $\Fib{2}W$ into $\Fib{2j}W$. From this, it follows that \textproc{PrependEven}($j$) prepends $\Fib{2j}$ to any word beginning with $\Fib{2}$. The correctness of both \textproc{BuildEven}($j$) and \textproc{PrependEven}($j$) follows simultaneously by strong induction on $j$; in the computation of \textproc{BuildEven}($j$), each call made to \textproc{PrependEven}($i$) has $i < j$, thus using only cases already established by the induction hypothesis.

Likewise, the procedure \textproc{PrependOdd}($j$) first duplicates the prefix $\Fib{1}$ and then successively calls \textproc{PrependEven}(1),\allowbreak \ldots,\allowbreak \textproc{PrependEven}($j$); by Lemma~\ref{lem:oddconcat}, this prepends $\Fib{2j+1}$ to the current word, exactly as in the claim used in the proof of Theorem~\ref{thm:even}.

If $p$ and $n$ are even, then the main loop successively prepends the odd-indexed factors appearing in Lemma~\ref{lem:evenconcat} and produces $\Fib{n}$ from $\Fib{p}$. If $p$ and $n$ are odd, then the main loop successively prepends the even-indexed factors appearing in Lemma~\ref{lem:oddconcat} and again produces $\Fib{n}$ from $\Fib{p}$. The preceding invariants ensure that each call to \textproc{Emit}($2$) occurs when the current word begins with $\Fib{1}$, while each call to \textproc{Emit}($3$) occurs when the current word begins with $\Fib{2}$. Since both lengths are at most $k$, each emitted prefix length is valid for $k \geq 3$.

Altogether, the procedure \textproc{Derive} returns a sequence of prefix lengths witnessing that $\Fib{n} \in \PD_{k}^{*}(\Fib{p})$.

%%%

\subsection{Time and Space Complexity}\label{subsec:complexity}

It is well known that the lengths of the Fibonacci words satisfy the Fibonacci recurrence, with $|\Fib{0}| = 1$ and $|\Fib{1}| = 2$. If $\varphi = (1+\sqrt{5})/2$ denotes the golden ratio, then $|\Fib{n}| \in \Theta(\varphi^{n})$.

Suppose that the procedure \textproc{Derive} returns a sequence of $m$ prefix lengths, and let $\Delta_{p,n} = |\Fib{n}| - |\Fib{p}|$. Since every emitted prefix length corresponds to a duplication of a prefix of length $2$ or $3$, we have
\begin{equation*}
\left\lceil \frac{\Delta_{p,n}}{3} \right\rceil \leq m \leq \left\lfloor \frac{\Delta_{p,n}}{2} \right\rfloor.
\end{equation*}
For every nontrivial input instance, $p$ and $n$ have the same parity, so $p \leq n-2$ and therefore
\begin{equation*}
|\Fib{n-1}| = |\Fib{n}| - |\Fib{n-2}| \leq \Delta_{p,n} \leq |\Fib{n}|.
\end{equation*}
Thus, $m \in \Theta(|\Fib{n}|) = \Theta(\varphi^{n})$. Each procedure call and loop iteration performs constant work apart from its recursive calls, and the total number of such calls and iterations is $O(m)$. Moreover, any derivation from $\Fib{p}$ to $\Fib{n}$ must contain at least $\lceil \Delta_{p,n}/k \rceil$ operations; when $k$ is fixed, this gives $\Omega(|\Fib{n}|)$ operations. Assuming that appending to $S$ takes constant time, Algorithm~\ref{alg:fw-derivation} runs in $\Theta(|\Fib{n}|)$ time for every nontrivial input $p < n$. (When $p = n$, the algorithm runs in $O(1)$ time.)

The recursive calls made by the procedure \textproc{BuildEven} have depth $O(n)$, and so Algorithm~\ref{alg:fw-derivation} uses $O(n)$ auxiliary space. Storing the complete sequence $S$ requires an additional $\Theta(|\Fib{n}|)$ space for every nontrivial input, although this could be avoided by streaming the sequence.

\begin{remark}
The authors have made available an accompanying software repository\footnote{\url{https://github.com/flarelabstfx/Fibonacci-words_Dumitrans-Conjecture_PD}} containing an implementation of the derivation procedure. Unlike Algorithm~\ref{alg:fw-derivation}, which follows the constructions given in the proofs of Theorems~\ref{thm:even} and~\ref{thm:odd} directly, the implementation in the repository is optimized such that, given a bound $k$ and values $p$ and $n$ corresponding to a seed word $\Fib{p}$ and a target word $\Fib{n}$, respectively, it always returns a \emph{shortest possible} $\PD_{k}$-derivation of $\Fib{n}$.
\end{remark}

%%%%%%

\section*{Acknowledgements}\label{sec:acknowledgements}

The authors would like to acknowledge support received through the Mitacs Globalink Research Award program, which allowed the first author to complete this research while visiting St.\ Francis Xavier University. This research was supported by the Natural Sciences and Engineering Research Council of Canada (NSERC) Discovery Grant RGPIN-2024-04799.

%%%%%%

\bibliographystyle{plain}
\bibliography{References}
%\nocite{*}

%%%%%%

\end{document}